\DeclarePairedDelimiter{\norm}{\lVert}{\rVert} 
\numberwithin{equation}{section}
\begin{document}

\begin{titlepage}
   \vspace*{\stretch{1.0}}
   \begin{center}
      \Large\textbf{Some connections between higher moments portfolio optimization methods}\\
	  \large\textit{Farshad Noravesh}\footnote{Email: noraveshfarshad@gmail.com}	  
	  \large\textit{Kristiaan Kerstens} \footnote{Email: k.kerstens@ieseg.fr}      
      
   \end{center}
   \vspace*{\stretch{2.0}}
\begin{abstract}
In this paper, different approaches to portfolio optimization having higher moments such as skewness and kurtosis are classified so that the reader can observe different paradigms and approaches in this field of research which is essential for practitioners in Hedge Funds in particular. Several methods based on different paradigms such as utility approach and multi-objective optimization are reviewed and the advantage and disadvantageous of these ideas are explained.
Keywords: multi-objective optimization, portfolio optimization, scalarization, utility
\end{abstract}

\end{titlepage}

\section{Introduction}

One can see the huge articles on higher moments of portfolio optimization as a spectrum of assumptions which varies from the most general and weakest assumptions to the most restricted and strong assumptions. Classification of these articles can be done in different contexts. An important context is whether we use parametric modeling or nonparametric modeling. It is interesting that these two paradigms have also been discussed in other fields of research such as machine learning where kernel methods such as kernel support vector machine(kernel-SVM) or nonlinear kernel dimensionality reduction is a nonparametric approach while methods like variational auto-encoder(VAE) is an example of parametric approaches. These two different paradigms has its root in statistics where we have parametric and nonparametric approach to solve statistical problems. An abstract way of explaining these two paradigms is well understood of we consider parametric approach as a way to make a finite dimensional approximation of probability distributions of returns, while in the nonparametric approach, no explicit form of distribution is assumed and our ignorance on reality avoids us to make strong assumptions of reality. In reality, space of distributions on asset returns are extremely complex. Thus, using parametric distributions builds a finite dimensional structure on the space of return distributions and therefore on the moments of the returns which are used in the objective function of portfolio optimization. The importance of utility function is well understood specially if a taylor series expansion of utility around the expected return is written to realize the connection of utility with higher moments such as skewness and kurtosis. Thus,
\begin{equation}
u(R)=u(E(R))+u'[E(R)][R-E(R)]+\frac{ u"[E(R)]}{{2!}}  \sigma^2_R+\frac{u'''[E[R]]}{3!} M^3_R
\end{equation}        
where $\sigma^2_R$   is the variance of expected return, and is the third moment about the mean of expected return which is skewness. The following utility functions are used in the literature:
\begin{equation}
\begin{split}
u(R)&=ln(\lambda R)  \\
u(R)&=\sqrt{\lambda R}  \\
u(R)&=-e^{-\lambda R}
\end{split}
\end{equation}
However, the third utility function is used more relatively for its nice properties.  Taking the expectation of utility after Taylor expansion would produce:
\begin{equation}
\begin{split}
E(u(R))&=-e^{\lambda E(R)}(1+\frac{\lambda}{1!}E(R-E(R))-\frac{\lambda^2}{2!}(R-E(R))^2 +\frac{\lambda^3}{31!}E(R-E(R))^3  \\
&-\frac{\lambda^4}{4!}(R-E(R))^4+O(R^5)  )  
\end{split}
\end{equation}
Thus, the following optimization problem is produced:
 \begin{equation}
min_{\omega} -\omega'\mu+\lambda_1 \omega' \Sigma \omega-\lambda_2\omega'M_3 (\omega \otimes \omega) +\lambda_3 \omega' M_4(\omega \otimes \omega \otimes \omega)
\end{equation}
where   $\omega \otimes \omega$   stands for Kronecker product and, $\lambda_1=\frac{\lambda^2}{2!} $   , $\lambda_2=\frac{\lambda^3}{3!}$ , $\lambda_3=\frac{\lambda^4}{4!}$ 
\citep{Glawischnig2013} has used an iterative approach to solve this objective function without using polynomial goal programming(PGP). They start with $\lambda$ equal to twenty and then reduce it in each step until two to avoid short selling of more than one hundred percent. The algorithm is multi-start to get robust results. The trick that \citep{Glawischnig2013} has used is that in the first iteration the skewness and kurtosis is calculated and then fixed to make the full optimization problem a valid quadratic optimization to be solved by quadratic solvers, but this iterative approach that changes the nature of the optimization problem and neglects the third order polynomials and 4th order polynomials could result a big difference. \citep{Glawischnig2013},\citep{Jondeau2006} almost use similar ideas and their approach has an advantage over PGP since the weights of the optimization are naturally formed by the taylor approximation of utility rather than arbitrary selection that is used in PGP. 
\section{Parametric and nonparametric modeling} 
The motivation comes from many different intuitions such as constraining the structure of return distributions of financial securities and approximate the complex nature of asset returns. \citep{Adcock2014} uses coherent multivariate probability distribution assumption for asset returns to be able to use stein’s lemma to create the mean-variance-skewness efficient hyper-surface. So the assumption is that the asset returns are defined by the convolution of a multivariate elliptically symmetric distribution and a multivariate distribution of non-negative random variables so that the efficient portfolios could be computed using quadratic programming on the efficient surface. Many researchers are amazed by the Multivariate skew normal (MSN) which was first introduced in \citep{azzalini1996} and was applied in finance in \citep{Adcock1999} is the center of many portfolio selection methods. One of the motivations of using MSN is the simplicity of the maximization of utility, specially if utility is an exponential function of return as explained in \citep{Adcock2005},\citep{Landsman2019}, the drawback of such an approach is explained in \citep{Adcock2005} when the preference parameter is too high or very small. But the bigger disadvantage of such an approach is simply shaping the mean-variance-skewness efficient surface by choosing an exponential utility function. Another drawback is that the generalization of this approach for kurtosis and higher moments is not straightforward. A generalization of MSN could be seen in \citep{SAHU2003} where an analytic forms of densities are obtained and there distributional properties are studied. These parametric modeling approaches to model distributions for returns are very diverse and many researchers have noticed that such as \citep{Jondeaua2003} that uses a generalized Student-t distribution or \citep{Mencia2009} which uses location scale mixture of normals and using maximum likelihood to infer the parameters. The log-normal distribution has also been used in the literature to model asset returns, but its skewness is a function of the mean and the variance, not a separate skewness parameter. An interesting example which combines parametric modeling(log normal) of portfolio distribution with goal programming is \citep{Changetal2008}. 
\citep{Glawischnig2013},\cite{Jondeau2006} are examples of nonparametric approach to portfolio optimization in higher moments since no distribution is assumed. The problem of skewness term in the optimization is that it makes the optimization a non-convex problem and therefore not tractable. This is the motivation of \citep{Konno1993} that without any assumption on the form of probability distribution of returns tries to approximate the third order term due to skewness by a piecewise linear approximation, however this approximation should be tested on more experimental data to see to what extent the approximation is valid. \citep{Konno1998} uses a similar approach and resolves the third order nonlinearity of skewness by representing it with a difference of two convex functions and then using branch and bound to solve the mean-variance-skewness portfolio optimization problem. A more detailed and visual derivation of \citep{Konno1998} can be seen in \citep{Konno1995},\citep{Konno2005}. The problem with ideas in \citep{Konno1993},\citep{Konno1998},\citep{Konno1995},\citep{Konno2005} is that it can not be generalized to higher moments easily and the order of complexity will be high. Another example for nonparametric higher moment portfolio optimization is based on the concept of shortage function and the geometric representation of mean-variance-skewness portfolio is illustrated in \citep{Kerstens2011}  
\section{A priori, interactive and posteriori}
The literature on higher moment portfolio optimization can be classified in a different context. In this context, it is important how the preferences are presented in the optimization process and therefore three different paradigms are discussed in the literature namely a priori, interactive and posteriori. A priori methods such as goal programming and utility function method are used when the preference of investor are known beforehand. In goal programming, first all objective optimization problems are solved regardless of other objectives. Then a final scalar objective optimization is solved that has some weights that need to be fixed. Even if all possible weights are checked, still some Pareto efficient solutions may be missing. Examples of using goal programming for higher moments portfolio optimization are \citep{Aksarayli2018},\citep{Bergh2008}. Although it is a simple algorithm, using appropriate weights is a debate. It can also produce solutions that are not Pareto efficient, and some algorithm is needed to project them back to the Pareto efficient solutions. Goal programming has many variants as explained in \citep{Aouni2014},\citep{Tamiz2013}.
1- Lexicographic
2- weighted, see for example \citep{Chang2011}
3- polynomial, see for example \citep{Chang2008},\citep{Chunhachinda1997},\citep{Davies2009},\citep{Lai1991},\citep{Mhiri2010},\citep{Proelss2014} and also \citep{Ghahtarani2013} if robustness with respect to some coefficients of the optimization is a concern as well.
4- stochastic 
5- fuzzy
Another paradigm is to use posteriori methods, where the target is finding all Pareto efficient frontier. Methods such as linear weighting method, weighted geometric mean and Normal Boundary Intersection (NBI)\citep{Audet2008}, Modified Normal Boundary Intersection (MNBI), Normal Constraint, Multiple-objective Branch-and-Bound,epsilon-constraints method and Pascoletti Serafini scalarization. Finally, there is interactive paradigm where the investor iteratively solves the optimization and gets feedback from solutions to find the Pareto optimal solutions. An example of using epsilon-constraints for portfolio optimization can be seen in \citep{Xidonas2010},\citep{Xidonasetal2010},\citep{Xidonas2011} which also includes an interactive filtering process to consider investor preferences. Methods like NBI are computationally expensive since for each iteration, an optimization problem should be solved but on the other hand has a geometrical intuition which can also be used for other methods as is explained in \citep{Kanafi2015}
\section{Pascoletti Serafini Scalarization (SP)}
SP has two parameters namely a and r, which are chosen from. The method is originally from \citep{Pascoletti1984} but is generalized in \citep{Eichfelder2008}. The novelty of their approach is  on conditions to bound the parameter a on a restricted hyperplane, but it can only be used in two dimensions and therefore not applicable to our four dimensional objective function that covers skewness and kurtosis as well.  \citep{Khorram2014} attempts to find the restricted set for parameter a but only the trivial point of zero for parameter a is considered and also the simple EP cone is considered. 
\begin{figure}
  \includegraphics[width=\linewidth]{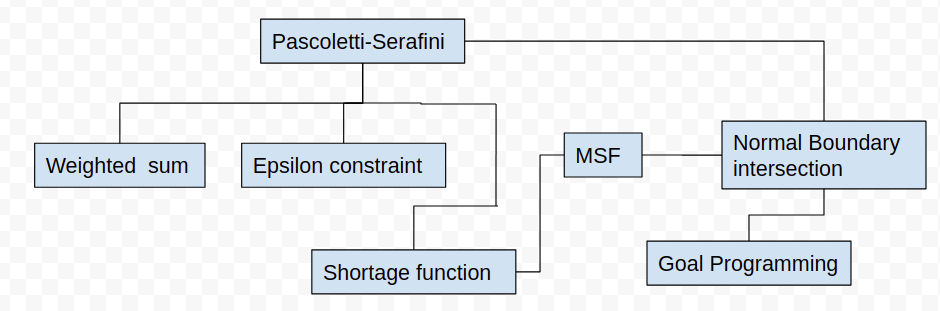}
  \caption{equivalence of multiobjective optimization algorithms}
  \label{fig:bigPicture}
\end{figure}

\subsection{Shortage Function (SF)}
\citep{Briec2004} introduced the shortage function as a portfolio performance measure in the traditional mean-variance portfolio framework.
The original shortage function is computed by solving the following problem:
\begin{equation}  \label{eq-SF}
\begin{split}
max \ \delta \\  
E[R(y^{k})]+\delta g_{E}&\leq E[R(x)] , \\
Var[R(y^{k})]-\delta g_{V} &\geq Var[R(x)] , \\
Sk[R(y^{k})]+\delta g_{S}&\leq Sk[R(x)] , \\
\sum x_i&=1 \ ,x_{i}\geq 0 , i=1,...,n
\end{split}
\end{equation}
where g is the direction vector and the effect of choosing g is well described in \citep{Kerstens2012}. A good explanation of how shortage function can be useful for mean-variance-skewness portfolio framework is explained in \citep{Briec2007}. The connection between shortage function and NBI method is explained in the present paper, however the connection between shortage function and polynomial goal programming (PGP) is well described in \citep{Briec2013}. 
Now, the following special case of shortage function is defined which is called modified shortage function (MSF) and is useful to show its connection with NBI method.
\renewcommand{\thesubsection}{\arabic{subsection}}
\newtheorem{MSFdefinition}{Definition}{subsection}
\begin{MSFdefinition}{Modified shortage function(MSF) is defined as the solution of the following optimization problem}\label{MSFdefinition}
\begin{equation}  \label{eq-MSF}
\begin{split}
max \ \delta \\  
E[R(y^{k})]+\delta g_{E}&= E[R(x)] , \\
Var[R(y^{k})]-\delta g_{V} &= Var[R(x)] , \\
Sk[R(y^{k})]+\delta g_{S}&= Sk[R(x)] , \\
\sum x_i&=1 \ ,x_{i}\geq 0 , i=1,...,n
\end{split}
\end{equation}
\end{MSFdefinition}

\subsection{NBI and SP}
In this section, the equivalence of NBI and MSF are proved. 
One of the most important scalarization technique is normal boundary intersection method (NBI) which is as follows
\begin{equation}  \label{eq-NBI}
\begin{split}
max \ s  \\
\Phi\beta+s\bar{n}&=f(x)-f^{*}   \\
s \in R ,& \ x\in\Omega
\end{split}
\end{equation}
Different optimization problems in \eqref{eq-NBI} for different  $\beta\in R^{m}_{+}$, 
having $\sum_{i=1}^{m} \beta_{i}=1$ are solved. Here $f^{*}$  denotes the so-called ideal point and the 
matrix $\Phi\in R^{m\times m}$ consists of the columns $f(x^i)-f^{*} \ ,i=1,\hdots,m$ and the vector $\bar{n}$ is defined as normal unit vector to the hyperplane directing to the negative orthant.
The problem of this method is that not all minimal points can be found as a solution of NBI. The following lemma shows the direct connection between NBI and modified SP.
\renewcommand{\thesubsection}{\arabic{subsection}}
\newtheorem{NBISPLemma}{Lemma}[subsection]
\begin{NBISPLemma}{\citep{Eichfelder2008}}\label{NBISP}
A point ($\bar{s},\bar{x}$) is a maximal solution of NBI with $\beta \in R^{m}$ , $sum_{i=1}^{m}\beta_{i}=1$, if and only if $(-\bar{s},\bar{x})$ is a minimal solution of $\overline{SP}(a,r)$ with $a=f^{*}+\Phi\beta$ and $r=-\bar{n}$
\end{NBISPLemma}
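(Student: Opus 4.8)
The plan is to establish the equivalence by an explicit change of variables that identifies the feasible set and objective of NBI in \eqref{eq-NBI} with those of $\overline{SP}(a,r)$ under the stated substitutions $a=f^{*}+\Phi\beta$ and $r=-\bar{n}$. Since both problems carry the same decision variable $x\in\Omega$ together with a single scalar variable ($s$ for NBI, and $t$ for $\overline{SP}$), the argument is essentially algebraic: I would show that the two constraint sets coincide point-for-point and that the two scalar objectives are negatives of one another, from which both directions of the equivalence drop out at once.

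First I would write out the defining equality constraint of $\overline{SP}(a,r)$, namely $a+t\,r=f(x)$, and substitute $a=f^{*}+\Phi\beta$ and $r=-\bar{n}$ to obtain $f(x)=f^{*}+\Phi\beta-t\,\bar{n}$. Comparing this with the NBI constraint $f(x)-f^{*}=\Phi\beta+s\,\bar{n}$ from \eqref{eq-NBI}, I would set $s=-t$, which makes the two equalities literally identical. Because $\bar{n}$ is a unit normal and in particular $\bar{n}\neq 0$, the map $s\mapsto -s$ is a bijection between the scalar variables, and for each fixed $x$ it carries feasible NBI pairs onto feasible $\overline{SP}$ pairs and conversely. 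This already shows the two feasible sets are in one-to-one correspondence.

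Next I would compare the objectives. NBI maximises $s$ while $\overline{SP}$ minimises $t=-s$, so maximising $s$ over the feasible set is equivalent to minimising $-s$ over the same feasible set. Hence a pair $(\bar{s},\bar{x})$ attains the NBI maximum if and only if $(\bar{t},\bar{x})=(-\bar{s},\bar{x})$ attains the $\overline{SP}$ minimum, which is exactly the claimed correspondence. Both directions of the equivalence follow simultaneously from the bijection of the previous step, so no separate converse argument is required.

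The step I expect to require the most care is pinning down the precise form of $\overline{SP}(a,r)$ and verifying that its feasible region truly coincides with that of NBI rather than merely containing it. In particular I would confirm that the equality form of the Pascoletti--Serafini problem is the one in force here, matching the equality constraints used in the definition of MSF in \eqref{eq-MSF}; if instead the cone-constrained version $a+t\,r-f(x)\in K$ is intended, I would need the additional observation that at an optimal $t$ the inclusion is active, so that the binding constraint collapses to the equality $a+t\,r=f(x)$ used above. Handling this activeness argument, and checking that the optimal scalar is attained so that \emph{maximal} and \emph{minimal} solutions exist and correspond, is the only place where anything beyond bookkeeping enters.
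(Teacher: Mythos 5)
Your proof is correct, and in fact it fills a gap rather than duplicating anything: the paper states this lemma without proof, quoting it from \citep{Eichfelder2008}, and your substitution argument --- identify the constraint $a+tr=f(x)$ of $\overline{SP}(a,r)$ with the NBI constraint in \eqref{eq-NBI} via $a=f^{*}+\Phi\beta$, $r=-\bar{n}$, $s=-t$, then turn $\max s$ into $\min t$ over the identical feasible set --- is precisely the standard proof of that result in the cited source. You also correctly isolated the only delicate point: the bar in $\overline{SP}(a,r)$ must denote the modified Pascoletti--Serafini problem in which the cone inclusion $a+tr-f(x)\in K$ is replaced by the equality $a+tr=f(x)$ (consistent with the equality constraints used for MSF in \eqref{eq-MSF}); with the cone version the ``if and only if'' would break, since a point feasible with a strict cone inclusion could allow $t$ to decrease further, so only one implication would survive. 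The only cosmetic quibble is that your appeal to $\bar{n}\neq 0$ is unnecessary --- the bijection $s\mapsto -s$ needs no such hypothesis --- but this does not affect the argument.
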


\subsection{NBI and MSF}
Another connection is between NBI and modified shortage function as is proved in the next proposition:
\renewcommand{\thesubsection}{\arabic{subsection}}
\newtheorem{NBIMSFtheorem}{Proposition}[subsection]
\begin{NBIMSFtheorem}{}\label{NBIMSFtheorem}
Modifed shortage function scalarization is equivalent to NBI by the following substitutions:
\begin{equation}
\begin{split}
s&=\delta  \\
f_{1}(x)&=E[R(x)]  \\
f_{2}(x)&=V[R(x)]  \\
f_{3}(x)&=Sk[R(x)] \\
\Phi \beta+f^{*}&=c  \\
\bar{n}&=g
\end{split}
\end{equation} 
where $c=\begin{bmatrix}
           E[R(y^{k})] \\
           Var[R(y^{k})] \\
           Sk[R(y^{k})]
         \end{bmatrix}$
\begin{proof}
proof is straightforward by a simple substitution.
\end{proof} 
\end{NBIMSFtheorem}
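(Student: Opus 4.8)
The plan is to show that, under the stated substitutions, the single vector constraint and the objective of the NBI problem \eqref{eq-NBI} reproduce, coordinate by coordinate, exactly the equality constraints and objective of the MSF problem \eqref{eq-MSF}. Once both problems are seen to share the same feasible set and the same objective value at every feasible point, every maximizer of one is a maximizer of the other, which is precisely the asserted equivalence.

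First I would rewrite the NBI constraint $\Phi\beta + s\bar{n} = f(x) - f^{*}$ in the equivalent isolated form $\Phi\beta + f^{*} + s\,\bar{n} = f(x)$, and then apply the substitutions $\Phi\beta + f^{*} = c$ and $\bar{n} = g$ to obtain $c + s\,g = f(x)$. Expanding this into its three scalar rows using $f_{1}(x)=E[R(x)]$, $f_{2}(x)=V[R(x)]$, $f_{3}(x)=Sk[R(x)]$ together with the components of $c=(E[R(y^{k})],Var[R(y^{k})],Sk[R(y^{k})])^{\top}$, and setting $s=\delta$, yields three equalities that I would match against the three rows of \eqref{eq-MSF}. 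The expected-return row and the skewness row align immediately once the relevant components of $g$ are identified with $g_{E}$ and $g_{S}$.

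The only delicate point, and hence the main obstacle, is the sign on the variance coordinate: in \eqref{eq-MSF} the variance constraint reads $Var[R(y^{k})]-\delta g_{V}=Var[R(x)]$ with a minus sign, whereas the expanded NBI row $Var[R(y^{k})]+\delta\,g_{(2)}=V[R(x)]$ carries a plus sign. This is reconciled by recalling that $\bar{n}$ is the unit normal oriented toward the negative orthant, so that the variance component of $g=\bar{n}$ is negative; identifying it with $-g_{V}$ makes the two rows coincide. This step is pure bookkeeping of orientation, but it is where the asymmetric sign pattern of the modified shortage function is explained.

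Finally I would observe that the portfolio feasibility conditions $\sum x_{i}=1$, $x_{i}\geq 0$ are exactly the admissible set $\Omega$ in \eqref{eq-NBI}, and that under $s=\delta$ the two objectives $\max s$ and $\max\delta$ are identical. Since the feasible regions and the objective functions then agree pointwise, the two scalarizations are the same optimization problem written in different notation, and their optimal solutions coincide, completing the argument.
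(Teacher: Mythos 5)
Your proof is correct and is exactly the substitution argument the paper itself gives, whose entire proof reads ``proof is straightforward by a simple substitution''; you simply carry out that substitution coordinate by coordinate. The one point where you go beyond the paper is worth keeping: the literal identification $\bar{n}=g$ does not match the minus sign in the variance row of \eqref{eq-MSF}, and your observation that the variance component of $\bar{n}$ (oriented toward the negative orthant) must be read as $-g_{V}$ is precisely the bookkeeping the paper's one-line proof glosses over.
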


\subsection{SP and SF}

\renewcommand{\thesubsection}{\arabic{subsection}}
\newtheorem{equivalencetheorem}{Proposition}[subsection]
\begin{equivalencetheorem}{}\label{equivalence}
Shortage function scalarization is equivalent to pascoletti serafini scalarization by the following substitutions:
\begin{equation}
\begin{split}
\delta&=-t  \\
a_{1}&=E(R(y^{k'}))=2E(R(x))-E(R(y^{k}))  \\
a_{2}&=V(R(y^{k'}))=2V(R(x))-V(R(y^{k}))  \\
a_{3}&=E(R(y^{k'}))=2S(R(x))-S(R(y^{k}))  \\
r_{1}&=g_{E}  \\
r_{2}&=g_{V}  \\
r_{3}&=g_{S}
\end{split}
\end{equation} 
\begin{proof}
Since $max \ -t=min \ t$ and the reference point can be set as any point in $R^3$, substituting the reference point(a) and the direction(r) in SP constaint which is $a+tr-f(x)\in K$ and choosing the cone K to be trivial $R^{3+}$ the proof is complete. 
\end{proof} 
\end{equivalencetheorem}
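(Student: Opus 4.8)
The plan is to recognize that both scalarizations share a common template---optimize a single scalar while a reference point shifted along a fixed direction is required to dominate, in a cone sense, the vector of objectives---and then to build the dictionary between them by matching the two problems line by line.

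First I would write $\overline{SP}(a,r)$ explicitly as $\min t$ subject to $a + t r - f(x) \in K$ and $x \in \Omega$, and expand the cone membership coordinatewise for the trivial ordering cone $K = R^{3}_{+}$. This produces the three scalar inequalities $a_i + t r_i - f_i(x) \ge 0$, $i = 1,2,3$, over the shared feasible set $\Omega = \{x : \sum_{i} x_i = 1,\ x_i \ge 0\}$, which is identical to the weight constraint appearing in the shortage function problem \eqref{eq-SF}.

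Next I would expand \eqref{eq-SF} in the same coordinate form, substitute $\delta = -t$ into the objective, and use $\max \delta = \max(-t) = -\min t$ to conclude that, once the feasible regions are shown to coincide, the two problems attain their optima at the same points. I would then compare the constraints one moment at a time: setting $r_i = g_i$ aligns the direction terms, since $t r_i = -\delta g_i$, and equating the remaining constant and objective terms pins down the required reference coordinates $a_i$ and hence the identification of $y^{k'}$.

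The step I expect to be the main obstacle is reconciling the mixed optimization senses of the three moments---mean and skewness are to be increased while variance is to be decreased---within a single ordering cone. In \eqref{eq-SF} this appears as the variance constraint carrying the reversed inequality and the $-\delta g_V$ sign, whereas the mean and skewness constraints carry $+\delta g_E$ and $+\delta g_S$; to force all three into the uniform membership $a + t r - f(x) \in R^{3}_{+}$ I would first fix the sign convention of each component $f_i$ (equivalently, orient the cone consistently) so that every row becomes a genuine ``$\ge 0$'' statement, and only then solve for $a_i$. A second subtlety, which must be handled carefully, is that a literal reading of the substitution $a_i = 2 f_i(x) - f_i(y^k)$ makes the reference point $a$ depend on the decision variable $x$, whereas in a genuine Pascoletti--Serafini family $a$ is a fixed parameter; the clean resolution is to interpret $y^{k'}$ as the reflection of the benchmark $y^k$ through the evaluated portfolio $x$, or, equivalently, to choose the sign of each $f_i$ so that $a_i$ collapses to the benchmark-only value $\pm f_i(y^k)$, which is genuinely constant. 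With the coordinates aligned in this way, the feasible pairs $(\delta, x)$ and $(t, x)$ are in bijection through $\delta = -t$, and the claimed equivalence follows.
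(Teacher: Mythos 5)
Your proposal follows the same basic route as the paper's own proof: set $\delta=-t$, take the ordering cone $K=R^{3}_{+}$, identify $r=(g_E,g_V,g_S)$, and match the constraints of \eqref{eq-SF} coordinatewise against the Pascoletti--Serafini membership $a+tr-f(x)\in K$. The difference is that your version is considerably more careful than the paper's one-line argument, and the two subtleties you isolate are genuine defects there, not pedantry. First, the paper never fixes the sign convention on $(f_1,f_2,f_3)$, even though the mixed optimization senses in \eqref{eq-SF} (mean and skewness pushed up, variance pushed down) are exactly what decides whether each row can be written as a ``$\geq 0$'' statement in a single cone. Second, and more seriously, the paper's stated substitution $a_i=2f_i(x)-f_i(y^{k})$ makes the reference point depend on the decision variable $x$, which contradicts the role of $a$ as a fixed scalarization parameter in a Pascoletti--Serafini family; your repair --- negate mean and skewness, i.e.\ take $f=(-E,V,-Sk)$, so that the reference point collapses to the constant vector $a=(-E[R(y^{k})],\,Var[R(y^{k})],\,-Sk[R(y^{k})])$ --- is the correct way to make the identification legitimate, and with it the coordinatewise matching closes without any $x$-dependence. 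Your sign analysis also exposes an internal inconsistency in the paper's substitution list: with raw $f_2=V$ the variance coordinate needs \emph{no} reflection at all ($a_2=Var[R(y^{k})]$ already matches, since the variance inequality in \eqref{eq-SF} is oriented the same way as the cone), so reflecting all three coordinates uniformly, as the paper does, actually reverses the variance constraint; likewise the paper's $a_3$ is mislabeled as $E(R(y^{k'}))$ when it should involve skewness. In short: same strategy as the paper, but your treatment is the one that actually constitutes a proof, while the paper's substitutions as printed do not define a valid $SP(a,r)$ instance.
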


\subsection{NBI and goal programming}
It is shown here that the popular goal programming(PGP) method which is widely used in portfolio optimization literature has close connection with NBI and is defined as follows:
\renewcommand{\thesubsection}{\arabic{subsection}}
\newtheorem{PGPdefinition}{Definition}{subsection}
\begin{PGPdefinition}{}\label{PGPdefinition}
Polynomial goal programming (PGP) is defined as:
\begin{equation} 
\begin{split}
PGP(\alpha,\beta)&=min \{d_{1}^\alpha+d_{3}^{\beta};d_{1}=z_{1}^{*}-z_{1},z_{2}=1,d_{3}=z_{3}^{*}-z_{3}     \}  \\
z_{1}^{*}&=max \{z_{1};z_{2}=1 \} \\
z_{3}^{*}&=max \{z_{3};z_{2}=1 \}
\end{split}
\end{equation}
\end{PGPdefinition}
\renewcommand{\thesubsection}{\arabic{subsection}}
\newtheorem{NBIPGPtheorem}{Proposition}[subsection]
\begin{NBIPGPtheorem}{}\label{NBIPGPtheorem}
A solution to NBI problem is also a solution to PGP portfolio optimization problem.
\end{NBIPGPtheorem}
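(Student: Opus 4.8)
\begin{proof}
The plan is to reuse the coordinate substitution of Proposition~\ref{NBIMSFtheorem} and then check that a maximizer of the NBI program satisfies the defining minimization property of PGP. First I would fix the correspondence between the two coordinate systems: identify the PGP objectives $z_{1}=f_{1}(x)=E[R(x)]$ and $z_{3}=f_{3}(x)=Sk[R(x)]$, pin the variance objective at the normalization $f_{2}(x)=V[R(x)]=1=z_{2}$, and read the ideal-point components as the individual optima $f_{1}^{*}=z_{1}^{*}$ and $f_{3}^{*}=z_{3}^{*}$. Under this identification the PGP deviations are exactly $d_{1}=z_{1}^{*}-f_{1}(x)$ and $d_{3}=z_{3}^{*}-f_{3}(x)$, i.e. the sign-adjusted entries of $f(x)-f^{*}$ that appear in the NBI equality constraint, with $\bar{n}=g$ as in Proposition~\ref{NBIMSFtheorem}.

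Next I would rewrite the NBI feasibility equation $\Phi\beta+s\bar{n}=f(x)-f^{*}$ component-wise. Since $f^{*}+\Phi\beta=\sum_{i}\beta_{i}f(x^{i})$ is a convex combination of the individual-optimum (anchor) points, the constraint pins $f(x)$ to the ray through that point in the direction $\bar{n}$, so each deviation $d_{1},d_{3}$ is an affine function of the single scalar $s$. Maximizing $s$ then drives $f(x)$ onto the Pareto frontier, the locus where $d_{1}$ and $d_{3}$ can no longer be decreased simultaneously.

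The crux, and the step I expect to be the main obstacle, is to show that the frontier point singled out by maximizing $s$ also minimizes the power sum $d_{1}^{\alpha}+d_{3}^{\beta}$. NBI aggregates the two objectives through a single fixed linear projection along $\bar{n}$, whereas PGP aggregates them through a nonlinear penalty, so in general the two scalarizations land on different frontier points and the implication can hold only for a compatible choice of the exponents $(\alpha,\beta)$. Concretely, I would write the PGP stationarity condition $\alpha d_{1}^{\alpha-1}\nabla d_{1}+\beta d_{3}^{\beta-1}\nabla d_{3}=0$ along the frontier and match it against the point that the NBI ray reaches, so that the claim reduces to checking that at the NBI optimum the exponents can be chosen to make the PGP objective stationary there. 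Verifying this matching, rather than the bookkeeping of the substitution, is where the real work lies.

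As an independent check I would route the claim through the equivalences already established: Proposition~\ref{NBIMSFtheorem} gives NBI $\equiv$ MSF, the MSF of Definition~\ref{MSFdefinition} is the equality-constrained specialization of the shortage function \eqref{eq-SF}, and the shortage-function/PGP link of \citep{Briec2013} carries an MSF optimizer to a PGP optimizer. Composing these maps reproduces the one-way implication obtained by the direct argument above.
\end{proof}
\end{NBIPGPtheorem}
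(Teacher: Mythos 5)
Your overall strategy coincides with the paper's: identify the PGP coordinates with the NBI data, recognize that the implication can only hold for a compatible choice of the exponents $(\alpha,\beta)$, and reduce the claim to matching stationarity conditions at the NBI optimum. The problem is that your proposal stops exactly at that reduction. You write that the claim ``reduces to checking that at the NBI optimum the exponents can be chosen to make the PGP objective stationary there'' and that this ``is where the real work lies'' --- but you never carry out that check, so the decisive step is missing and what remains is a plan rather than a proof. The paper does carry it out: it writes the first-order KKT system of NBI, $\nabla_{X}F(x^{*})\lambda^{*}=0$ together with $1+\hat{n}\lambda^{*}=0$, writes the PGP stationarity equations in the two blocks of variables, namely $\nabla_{\omega}Z\,\mu^{\star}=0$ for the asset weights and
\begin{equation*}
\alpha d_{1}^{\alpha-1}+\mu^{(1)}=0,\qquad \mu^{(2)}=0,\qquad \beta d_{3}^{\beta-1}+\mu^{(3)}=0
\end{equation*}
for the deviation variables, identifies the weight-block equation of PGP with the stationarity equation of NBI (so the multipliers can be matched), expands $1+\hat{n}_{1}\lambda_{1}^{*}+\hat{n}_{2}\lambda_{2}^{*}+\hat{n}_{3}\lambda_{3}^{*}=0$, and then solves the combined system for the exponents, obtaining the explicit substitutions
\begin{equation*}
\alpha=\frac{\hat{n}_{1}\lambda_{1}^{*}+\hat{n}_{2}\lambda_{2}^{*}+\hat{n}_{3}\lambda_{3}^{*}}{d_{1}^{\alpha-1}},\qquad
\beta=-\frac{\mu^{(3)}}{d_{3}^{\beta-1}}.
\end{equation*}
Exhibiting these formulas --- showing that for any NBI solution and multiplier vector there exist exponents making the same point PGP-stationary --- is the entire content of the proposition; it is precisely the step you defer.

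Two further cautions. First, your intermediate claim that maximizing $s$ ``drives $f(x)$ onto the Pareto frontier'' should not be relied upon: the paper itself notes that NBI need not recover all minimal points, and the paper's argument uses only the first-order conditions of the NBI point, never its Pareto optimality. Second, your fallback route via Proposition~\ref{NBIMSFtheorem}, Definition~\ref{MSFdefinition} and the shortage-function/PGP link of \citep{Briec2013} leans on an external result that the paper cites but nowhere reproduces, so it cannot substitute for the missing stationarity matching; at best it serves as a consistency check.
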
 
\begin{proof}
Since $(x^{*},s^{*},\lambda^{*})$ is the solution of NBI it satisfies the first order KKT conditions:
\begin{equation} \label{eq-stationarityForNBI}
\begin{split}
\nabla_{X}F(x^{*})\lambda^{*}&=0  \\
1+\hat{n}\lambda^{*}&=0
\end{split}
\end{equation}
where $\lambda\in R^{3}$ represents the multipliers corresponding to the 3 equality constraints namely return,variance and skewness constraint.
On the other hand, the first oder KKT condition for PGP can be decomposed for two different sets of coordinates. Stationarity equations for the first set of coordinates $(\omega_{1},\omega_{2},\omega_{3})$ results:
\begin{equation}  \label{eq-stationarityForFirstSet} 
\nabla_{\omega}Z \ \mu^{\star}=0
\end{equation} 
Now stationarity with respect to second set of coordinates $(d_{1},d_{3})$ yields:
\begin{equation}  \label{eq-stationarityForSecondSet} 
\begin{split}
\alpha d_{1}^{\alpha -1}+\mu^{(1)}&=0  \\
\mu^{(2)}&=0  \\
\beta d_{3}^{\beta -1}+\mu^{(3)}&=0  
\end{split}
\end{equation}
Now expanding $1+\hat{n}\lambda^{star}=0$ in \ref{eq-stationarityForNBI} generates:
\begin{equation}  \label{expandedInnerProduct}
1+\hat{n}_{1}\lambda_{1}^{*}+\hat{n}_{2}\lambda_{2}^{*}+\hat{n}_{3}\lambda_{3}^{*}=0
\end{equation}
Simplifying \ref{eq-stationarityForFirstSet} would produce:
\begin{equation}   \label{simplified}
1+\alpha \frac{d_{1}^{\alpha -1}}{\mu^{(1)}}=0   
\end{equation}
combining \eqref{simplified} and \eqref{expandedInnerProduct} and solving for 
$\alpha$ , $\beta$ results:
\begin{equation}  \label{alphabeta}
\begin{split}
\alpha&=\frac{\hat{n}_{1}\lambda_{1}^{*}+\hat{n}_{2}\lambda_{2}^{*}+\hat{n}_{3}  \lambda_{3}^{*}}{d_{1}^{\alpha-1}}   \\
\beta&=-\frac{\mu{(3)}}{d_{3}^{\beta-1}}
\end{split}
\end{equation}
Equivalently, $(\omega^{*},d^{*},\mu^{*})$ is the solution of PGP problem and since F and X and in NBI problem are the same as Z and $\omega$ in PGP problem and for any solution and lagrange multipliers there exists equivalent ones by suitable substitutions for $\alpha$ and $\beta$ like \eqref{alphabeta} the proof is complete.   
\end{proof}

\section{Methodology}
There are some quality measures that can be used to figure out which algorithm is better than the other. These measures are well explained in \citep{Eichfelder2008} namely coverage error($\epsilon$), uniformity level($\delta$) and cardinality(number of points) \\
These three measures are conflicting in nature and another multiobjective optimization on top of the main optimization problem may be formed for a rigorous analysis. Satisfying these type of measures are what most researchers refer to as adative considerations. In another context, there are two important aspects in any multiobjective optimzation namely accuracy and diversity. The former forces the solutions to converge to pareto frontier while the latter makes the efficient set equidistanced as much as possible. There are many other measures in the literature such as hypervolume indicator but implementing some of them makes the algorithm very slow and even convergence of them are not proved.
\begin{figure}
  \includegraphics[width=\linewidth]{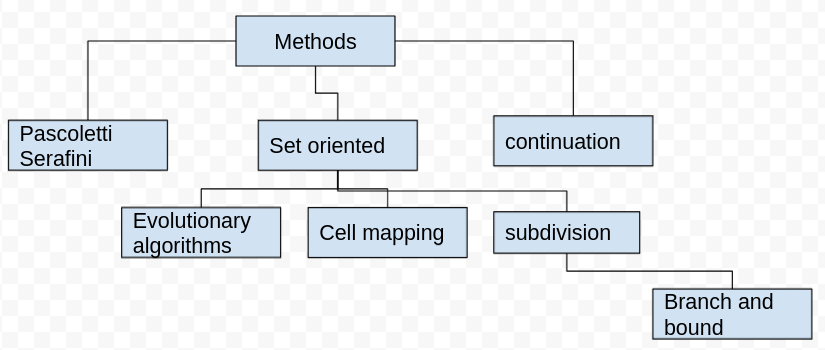}
  \caption{multiobjective optimization methods}
  \label{fig:methods}
\end{figure} 
\\
So far, mostly Pascoletti Serafini methods are discussed in the present paper but there are other ideas in the literature as depicted in Figure~\ref{fig:methods}. Set oriented methods steer a set of solutions at each iterations such as \citep{Hernandez2013} for cell mapping method or \citep{Dellnitz2005} for subdivision algorithms.
In the present paper, two adaptive algorithm for higher moment multiobjective portfolio optimization are given. The first one is an adaptive epsilon constraint method while the second one is not based on scalarization and has its root in \citep{Hillermeier2001} but is recently developed by many researchers as in \citep{Martin2018} , \citep{Schutze2020} and also generalized for problems having inequality constraints in \citep{Beltran2020}. 
Both of the proposed algorithms in this section are based on KKT conditions but the approaches are slightly different. Both methods are designed to produce equidistance pareto frontier points. The equidistance parameter in the first algorithm is $\alpha$ while in the second algorithm it is called $\tau$ to mimick the variables in the related historical articles. So both methods are adaptive in the sense of equal distance points on efficient frontier but no other considerations are taken for the delta and cardinality quality measures since they are expected to produce good results, otherwise they add to the complexity of the algorithms.The first algorithm is based on Epsilon constraint which is a special case of Pascoletti Serafini while the second algorithm is based on continuation methods and the connections are well shown in Figure~\ref{fig:methods}.

\subsection{Adaptive Epsilon Constraint}
Adaptive ECS is described as 
\begin{equation}  \label{PKepsilon}
\begin{split}
min &\ f_{k}(x)  \\
f_{i}(x) &\leq \epsilon_{i} , i \in \{1,\hdots,m\} \setminus \{k\}   \\
x \in \Omega   
\end{split}
\end{equation}

The scalar optimization problem \eqref{PKepsilon} can be formulated as \\
\begin{equation}  \label{eq-PKepsilon}
\begin{split}
min \ t \\  
\epsilon_{i}-f_{i}(x) &\geq 0  \  i\in \{1,\hdots,m\} \setminus \{k\} \\
t-f_{k}(x)&\geq 0  \\
g_j(x)&\geq 0 \  j\in \{1,\hdots,p\} \\
h_{l}(x)&=0   \  l\in \{1,\hdots,q\}  \\
t \in R &, \  x \in R^{n}  \\
\end{split}
\end{equation}
It is proved by Theorem 2.27 in \citep{Eichfelder2008} how it is possible to relate SP to epsilon-constraint method via lagrange multipliers. So using the following substitutions for $a_i$ and r, if $\bar{x}$ is a minimal solution of  \eqref{eq-PKepsilon} . Thus, \eqref{eq-PKepsilon} is equivalent to SP(a,r) with the following substitutions for $a_i$ and r :
\begin{equation}
\begin{split}
a_{i}&=\epsilon_i \ \forall i \in \{1,\hdots,m\} \setminus \{k\}   \\
a_{k}&=0  \\
r&=e_{k}
\end{split}
\end{equation}  
The full algorithm is shown in algorithm~\ref{alg:algorithm-label}.
\begin{algorithm}[H]
\caption{Adaptive $\epsilon$ constraint method for mean-variance-skewness }
\label{alg:algorithm-label}
Input: Choose the desired number $N_1$ of discretization points
       for the range of the functions $f_1$(i.e. in direction $v^1=(1,0,0)^T$ and $N_2$ for the range of function $f_2$ (i.e. in direction  $v^2=(0,1,0)^T$ )  \\
       Step 1: solve the optimization problems $min_{x \in \Omega} f_{i}x$
       with minimal solution $x^{min,i}$ and minimal value $f_i(x^{min,i})$
       for i=1,2 as well as $max_{x \in \Omega} f_{i}x$ with maximal solution $x^{max,i}$ and maximal value $f_{i}(x^{max,i})=\epsilon_{i}^{max}$ for i=1,2  \\
       Step 2: Set  $L_i=\frac{\epsilon_{i}^{max}-\epsilon_{i}^{min}}{N_i}$
       and solve the problem $P_3(\epsilon)$ \\
       for all parameters $\epsilon \in E$ with 
       $E:=\{ \epsilon=(\epsilon_1,\epsilon_2)\in R^2 | \epsilon_i=\epsilon_{i}^{min}+\frac{L_{i}}{2}+l_i.L_i$  \\
       for $l_i=0,...,N_{i-1} , i=1,2  \}$  \\
       Determine the set 
       $A^{E}=\{ (\epsilon,\bar{x},\bar{\mu}) | \bar{x} \} $ where $\bar{x}$ is a minimal solution of 
       $(P_{3}(\epsilon))$ with parameter $\epsilon$  and lagrange multiplier  $\bar{\mu}$  
        to the constraints $f_i(x) \leq  \epsilon_{i}$ \\
        $,i=1,2$  for $\epsilon \in E   $ \\
        Step 3: Determine the set
        $D^{H^0,f}:=\{f(x)|\exists \epsilon \in R^{2}, \mu \in R^{2}_{+} with (\epsilon,x,\mu)\in A^{E} \} $  \\
        Input: Choose $y\in D^{H^{0},f}$ \ with $y=f(x^{\epsilon})$ and $(\epsilon ,x^{\epsilon},\mu^{\epsilon}) \in A^{E}$
        if y is a sufficient good solution, then stop. Else, if additional points in the neighborhood of y are desired, give a distance $\alpha \in R$, $\alpha>0$ in the image space and the number of desired new points $\bar{n}=(2k+1)^2-1$ (for a $k \in N$) and go to Step 4. \\
        Step 4: Set 
        $\epsilon^{ij}:=\epsilon+i.\frac{\alpha}{1+(\mu_{1}^{\epsilon})^{2}}  \begin{bmatrix}
           1 \\
           0 \\
         \end{bmatrix} + j.\frac{\alpha}{1+(\mu_{2}^{\epsilon})^{2}}  \begin{bmatrix}
           0 \\
           1 \\
         \end{bmatrix} $   \\
         for all  \\
         $(i,j) \in \{ (i,j)\in Z^{2} | i,j \in \{-k,\hdots,k\},(i,j) \neq (0,0)     \} $
                
and solve problem $(P_{3}(\epsilon^{i,j}))$  \\
if there exists a solution $x^{i,j}$ with Lagrange multiplier $\mu^{i,j}$, then set $A^{E}:=A^{E} \cup \{ \epsilon^{i,j}, x^{i,j} , \mu^{i,j}  \}$. \\
Go to step 3. \\
Output: The set $D^{H_{0},f}$ is an approximation of the set of weakly efficient points 
\end{algorithm}
\begin{figure}[H]
  \includegraphics[width=\linewidth]{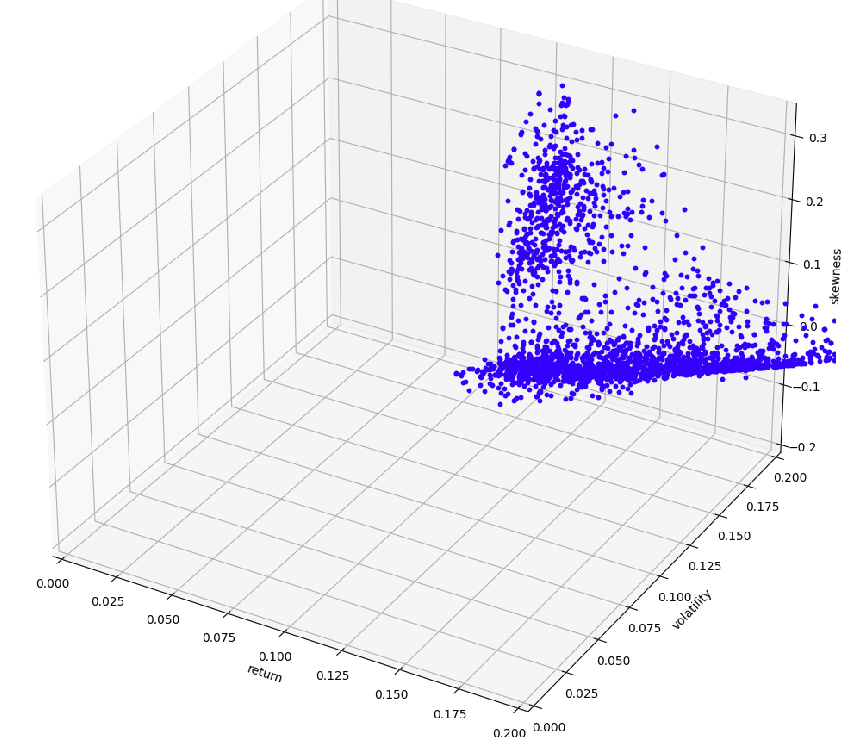}
  \caption{Pareto Front using adaptive epsilon constraint method with 2500 points}
  \label{fig:EC2zoom}
\end{figure}
The simulation results for adapative epsilon constraint method is illustrated in Figure~\ref{fig:EC2zoom}. As is explained in the previous section $\epsilon$ constraint method is just a special case of SP and the following algorithm is used for multiobjective portfolio optimization having three objectives namely return, variance and skewness.
\subsection{Adaptive Multi-start Pareto Tracer}
There are two main ideas in this approach: 1-KKT condition in single objective is generalized to multiobjective as is described in \citep{Hillermeier2001}. 2-A predictor corrector idea which first predicts the next move in decision space and then corrects it by a multiobjective gradient descent. The algorithm \ref{alg:AMPT} is a modification of \citep{Martin2018} by doing it in a multi start way and shaping the objective space distribution and customizing it for portfolio optimization of three objectives namely mean, variance and skewness.
Consider the multiobjective optimization problem defined bellow:
\begin{equation}
\begin{split} \label{eq-PT-MOP}
min \  &F(x) \\
s.t \ h(x)&=0
\end{split}
\end{equation}
where F is a vector of objectives $F:R^{n}\rightarrow R^{k}$. F in \ref{eq-PT-MOP} is actually a three dimensional vector of mean,variance and skewness and decision space has dimension n which refers to number of assets or factors in a multifactor investment framework. h in \ref{eq-PT-MOP} is the constraint that the sum of allocations to different assets should be one. A predictor corrector method is developed in \citep{Hillermeier2001} by considering
\begin{equation} \label{eq-Hiler}
\tilde{F}(x,\alpha)=\begin{bmatrix}
         \sum_{i=1}^{k}\alpha_{i}\nabla f_{i}(x) \\
         \sum_{i=1}^{k}\alpha_{i}-1  
         \end{bmatrix}=0
\end{equation}
The set of KKT points of \ref{eq-PT-MOP} is contained in the null set of $\tilde{F}$ which is the idea behind many continuation methods along $\tilde{F}^{-1}(0)$  as written in \ref{eq-Hiler}. So a simple representation for the tangent vectors to Pareto set can be written as
\begin{equation} \label{eq-Hiler-prime}
\tilde{F}(x,\alpha)\begin{bmatrix}
\nu  \\
\mu
\end{bmatrix}
=\begin{bmatrix}
         \sum_{i=1}^{k}\alpha_{i}\nabla^{2} f_{i}(x) & \nabla f_{1}(x) & \hdots & \nabla f_{k}(x)  \\
         \sum_{i=1}^{k}\alpha_{i}-1    & 1 & \hdots & 1
         \end{bmatrix} \begin{bmatrix}
         \nu  \\
         \mu
         \end{bmatrix}=\begin{bmatrix}
         0  \\
         0
         \end{bmatrix}
\end{equation}
Thus the vector in \ref{eq-Hiler-prime} can be expressed as
\begin{equation} \label{eq-vmu}
v_{\mu}=-W^{-1}_{\alpha}J^{T}\mu 
\end{equation}
where $W_\alpha$ in \eqref{eq-vmu} is defined as
\begin{equation}
W_\alpha:=\sum_{i=1}^{k}\alpha_{i}\nabla^{2}f_{i}(x) \in R^{n\times m} 
\end{equation}
and J is defined by
\begin{equation}
J=J(x)=\begin{bmatrix}
\nabla f_{1}(x)^{T} \\
\vdots  \\
\nabla f_{k}(x)^{T}  \in R^{k \times n}
\end{bmatrix}
\end{equation}
Now d has a special meaning that expresses the first order approximated movement in objective space for infinitesimal step sizes and is defined below
\begin{equation}  \label{eq-d}
d:=J\nu
\end{equation}
Using \eqref{eq-vmu} and the definition of d in \eqref{eq-d} makes
\begin{equation}  \label{eq-jmuIsD}
J\nu_{\mu_{d}}=-JW^{-1}_{\alpha}J^{T}\mu=d
\end{equation} 
Now any possibility for selecting d will generate a different distribution on pareto front. Since the resulting $\nu_{mu}$ are tangents to Pareto set and the aim is making even spread of points on Pareto front, directions d should be selected such that it makes orthonormal basis of tangent space to Pareto front at F(x). One of the natural ways to do this is utilizing QR factorization of $JW^{-1}J^{T}$ and by selecting 
\begin{equation}
d_{i}:=q_{i+1}  \  i=1,\hdots,k-1
\end{equation}
$\mu$ can be solved. Now it is possible to obtain the predictor $p:=x+t\nu$ to make an evenly distributed set of solutions along the Pareto front the following approximation for t can be chosen
\begin{equation}
\begin{split}
\norm{F(x_{i})-F(x_{i+1}) &\approx \tau}  \\
t=\frac{\tau}{\norm{J\nu_{\mu}}}
\end{split}
\end{equation} 
The predictor part explained so far cares about going through the tangent direction in Pareto set to create an equidistanced set of Pareto front points. The next part is the corrector which cares about convergence issues and is explained in \citep{Fliege2009}and the present paper implements it for the corrector part of the algorithm, but there are other methods that can be used for this step such as \citep{Povalej2014}   which approximates the second derivative matrices instead of evaluating them although both methods have superlinear rate of convergence. The corrector part implemented in the present paper is based in minimization of the following optimization problem.
\begin{equation}  \label{eq-corrector}
\begin{split}
min \  g(t,s)&=t  \\
s.t \ \nabla F_{j}(x)^{T}s+\frac{1}{2}s^{T}\nabla^{2}F_{j}(x)s-t&\leq 0   \  (t,s)\in R \times R^{n}  
\end{split}
\end{equation}
Thus, the full predictor-corrector algorithm is shown in Algorithm~\ref{alg:AMPT}.
\begin{algorithm}[H]
\caption{Adaptive Multi-start Pareto Tracer for mean-variance-skewness }
\label{alg:AMPT}
Repeat Predictor-corrector loop: \\
Input : create some bundles to prepare for the multistart algorithm
Predictor part:  \\
Step 1: calculate $\mu$ by solving \eqref{eq-jmuIsD} \\
Step 2: calculate direction $v_{mu}$ from \eqref{eq-vmu}  \\
Step 3: update the predictor position in decision space by $p:=x+t\nu_{\mu}$ \\
Corrector part:  \\
Step 4: solving \eqref{eq-corrector} and update x  
\end{algorithm}    
\begin{figure}
  \includegraphics[width=\linewidth]{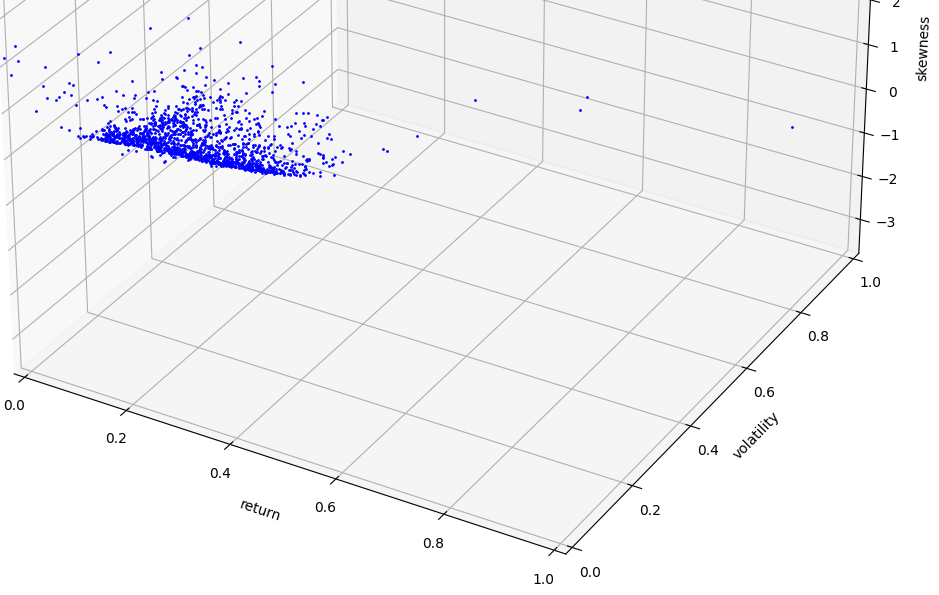}
  \caption{adaptive multistart Pareto Tracer}
  \label{fig:PT1}
\end{figure}
Simulation results are shown in  Figure~\ref{fig:PT1}. Since the current algorithm has many parameters to tune and to get all parts of pareto front in a faster way, an evolutionary algorithm could be combined with the current algorithm to make a hybrid algorithm.
\section{Conclusions}
The paradigms in higher moment multiobjective optimization are critically reviewed and the connection between some of them are explained in the present paper. It has been proved that shortage function method can be seen as a Pascoletti Serafini scalarization. Finally, two algorithms for portfolio optimization are suggested. The first one is based on scalarization paradigm and is called adaptive epsilon constraint method while the second one is a type of continuation method and is called adaptive multistart Pareto Tracer which bundles different local solutions to provide a global Pareto Front by both exploration and exploitation.
 
\section{Future Works}
The first suggested algorithm can be modified by handling variable structure ordering as is explained in \citep{Eichfelder2014} and \citep{Eichfelder2012}. So instead of a fixed cone K for ordering, a variable cone is considered and each point could have a different ordering corresponding to a different cone.
The second suggested algorithm can be further developed by hybridizing it with evolutionary algorithms such as genetic algorithm to make it faster.
Another line of research which could be theoretically interesting is to see if the continuation method suggested in the second algorithm could be seen as generalization of Pascoletti Serafini ,since the continuation framework is directly working on KKT conditions for a multiobjective optimization problem while scalarization takes advantage of KKT condition of a mono objective problem.
\bibliographystyle{agsm}
\bibliography{portfolioOptimization}
\end{document}